\newif\ifconf
\renewcommand{\phi}{\varphi}
\title{Planarizing an Unknown Surface}
\author{Yury Makarychev\thanks{Yury Makarychev is supported in part by the NSF Career Award CCF-1150062.}
\and Anastasios Sidiropoulos}
\institute{Toyota Technological Institute at Chicago\\
\url{yury@ttic.edu, tasos@ttic.edu}}
\begin{document}
\maketitle

\begin{abstract}
It has been recently shown that any graph of genus $g>0$ can be stochastically embedded into a distribution over planar graphs, with distortion $O(\log (g+1))$ [Sidiropoulos, FOCS 2010]. 
This embedding can be computed in polynomial time, provided that a drawing of the input graph into a genus-$g$ surface is given.

We show how to compute the above embedding without having such a drawing.
This implies a general reduction for solving problems on graphs of small genus, even when the drawing into a small genus surface is unknown.
To the best of our knowledge, this is the first result of this type.
\end{abstract}


\section{Introduction}

The genus of a graph is a parameter that quantifies how far it is from being planar.
Informally, a graph has genus $g$, for some
$g\geq 0$, if it can be drawn without any crossings on the surface of
a sphere with $g$ additional handles (see Section~\ref{sec:prelims}).  For example, a planar graph has genus $0$, and a
graph that can be drawn on a torus has genus at most $1$.

Planar graphs exhibit properties that give rise to improved algorithmic solutions for numerous problems (see, for example \cite{Baker-planar}).
Because of their similarities to planar graphs, graphs
of small genus enjoy similar algorithmic characteristic.
More precisely, algorithms for planar graphs can usually be extended to graphs of bounded genus, with a small loss in efficiency or quality of the solution (e.g.~\cite{CEN09}).

Unfortunately, such extensions typically suffer from two main difficulties.
First, for different problems, one typically needs to develop complicated, and ad-hoc techniques.
Second, a perhaps more challenging issue is that essentially all known algorithms for graphs of small genus require that a drawing of the input graph into a small genus surface is given.
In general, computing a drawing of a graph into a surface of minimum genus is NP-hard \cite{Thomassen89,Thomassen93a}.
Moreover, the currently best-known approximation algorithm for this problem is only a trivial $O(n)$-approximation that follows by bounds on the Euler characteristic.
This has been improved to $O(\sqrt{n})$-approximation for graphs of bounded degree \cite{ChenKK97}.

The first of the above two obstacles has been recently addressed for some problems by Sidiropoulos \cite{sidiropoulos2010optimal}, who showed that any graph of genus $g>0$ can be embedded into a distribution over planar graphs, with distortion $O(\log (g+1))$ (see Section~\ref{sec:prelims} for definitions).
This result implies a general reduction for a large class of geometric optimization problems from instances on genus-$g$ graphs, to corresponding ones on planar graphs, with a $O(\log (g+1))$ loss factor in the approximation guarantee. 

Unfortunately, the algorithm from \cite{sidiropoulos2010optimal} can compute the above embedding in polynomial time, only if a drawing of the input graph into a small genus surface is given.
We show how to compute this embedding even when the drawing of the input graph is unknown.
In particular, this implies that the above reduction for solving problems on graphs of small genus, can be performed even on graphs for which we don't have a drawing into a small genus surface.
The statement of our main embedding result follows.

\begin{theorem}[Main result]\label{tim:main}
There exists a polynomial time algorithm which given a graph $G$ of genus $g>0$, computes a stochastic embedding of $G$ into planar graphs, with distortion $O(\log (g+1))$.
In particular, the algorithm does not require a drawing of $G$ as part of the input.
\end{theorem}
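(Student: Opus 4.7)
My plan is to adapt the algorithm of \cite{sidiropoulos2010optimal} so that the drawing can be removed from its input. That algorithm, given a drawing of $G$ on a genus-$g$ surface, iteratively identifies a short non-contractible cycle (via BFS from a random center with a Bartal-style random radius), cuts along it to reduce the genus by one, and recurses; the total expected distortion is $O(\log(g+1))$ over $O(\log g)$ levels of recursion. The key observation is that the primitive operations used---BFS, cycle identification via non-tree edges, and edge/vertex duplication along a BFS boundary---are purely combinatorial and can be carried out on $G$ alone. The drawing is consulted only to select which BFS-boundary cycle is non-contractible at each level, and in the analysis to certify that the genus has strictly decreased.

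To bypass the first use of the drawing, I would enumerate or randomly sample the candidate cuts consistent with a given center and radius; since there are polynomially many such candidates, this incurs only a polynomial overhead. To bypass the second use, I would conclude the procedure with a polynomial-time planarity test, resampling the random choices a polylogarithmic number of times if the result is non-planar. Since $g$ is not given, the outer loop iterates over guesses $g = 1, 2, 4, \ldots, n$ and halts as soon as planarity is certified. For the distortion analysis I would invoke \cite{sidiropoulos2010optimal} essentially as a black box: its distribution over cuts is defined purely in terms of graph distances and BFS layers---combinatorial quantities well-defined without a drawing---so the expected distortion bound $O(\log(g+1))$ applies conditioned on a successful run, and conditioning on planarity only inflates this by the reciprocal of the (constant) success probability.

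The main obstacle will be quantifying how often the drawing-free procedure succeeds in reducing the genus at a single level. In the original algorithm the drawing guarantees that the chosen cycle is non-contractible; here we need to argue that among polynomially many candidate cuts at a random radius, at least one actually lowers the genus with inverse-polynomial probability. I would try to derive this by transferring the counting arguments of \cite{sidiropoulos2010optimal}---which count short non-contractible cycles using the surface geometry---into purely combinatorial bounds on the number of such cycles in $G$, using standard facts about the edge-count and cycle structure of graphs of genus $g$. A union bound over polynomially many random trials per level then suffices, and the overall distortion remains $O(\log(g+1))$ as required.
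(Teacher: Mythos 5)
There is a genuine gap, and it sits exactly where the paper's actual technical work lies. Your plan reduces everything to the claim that, at each level, among polynomially many drawing-free candidate cuts, one reduces the genus with inverse-polynomial probability, and that this can be detected and iterated -- but this is precisely the step you leave as ``I would try to derive this.'' Nothing in \cite{sidiropoulos2010optimal} supplies it: without an embedding you cannot certify that the genus has strictly decreased (computing the genus is NP-hard, and a planarity test only speaks to the final graph, not to intermediate progress), and ``cutting along'' a cycle is not a purely combinatorial operation in the first place, since deciding which copy of a duplicated vertex each incident edge attaches to requires knowing the rotation system, i.e.\ the drawing. Moreover, the distortion bound cannot be invoked as a black box once you change the sampling: the $O(\log(g+1))$ guarantee is proved for a specific distribution over cuts built from a planarizing set of $O(g)$ shortest paths through a common root; if your resampling/enumeration selects contractible or superfluous cuts, the number of cut paths $k$ (and hence the $O(\log k)$ distortion, and even termination) is uncontrolled, and the ``divide by a constant success probability'' argument presupposes the very constant-probability success you have not established. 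Note also that for constant $g$ the target distortion is $O(1)$, so you cannot afford any per-level failure probability or cut-count blow-up that depends on $n$; your scheme has no mechanism to prevent that.

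The paper avoids all of this by never trying to find non-contractible cycles without a drawing. Instead it proves that a genus-$g$ graph always has a \emph{balanced} separator formed by $O(g)$ root--leaf paths of any spanning tree (via the cut-graph theorem of Eppstein and Erickson--Whittlesey plus a weighted Lipton--Tarjan/Thorup separator on the contracted planar graph, Lemma~\ref{lem:genus_separator_weighted}); it then shows such a path separator can be \emph{approximated} in an arbitrary graph, with no drawing, by passing to a caterpillar decomposition of $T$ and running the Feige--Hajiaghayi--Lee $O(\sqrt{\log n})$-approximation for balanced vertex separators (Lemma~\ref{lem:genus_separator_weighted_approximate}); recursing $O(\log n)$ times on the at most $g$ non-planar components yields a planarizing set of $O(g^2\log^{5/2} n)$ shortest paths (Lemma~\ref{lem:computing_planarizing_paths}), to which Lemma~\ref{lem:optimal} applies with distortion $O(\log|X|)$. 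Finally, the case $g\le \log n$ -- where $O(\log|X|)$ would exceed $O(\log(g+1))$ -- is handled by computing an exact drawing with the $2^{O(g)}n$ algorithm of Kawarabayashi, Mohar and Reed and invoking \cite{sidiropoulos2010optimal} directly; for $g>\log n$ one has $|X|=O(g^{9/2})$ and the bound follows. Your proposal would need an analogue of each of these ingredients (a drawing-free progress measure, a quantitative success guarantee, and a separate treatment of small $g$) before it could be considered a proof.
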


\subsection{Applications}

The main application of our result is a general
reduction from a class of optimization problems on
genus-$g$ graphs, to their restriction on planar graphs.
This is the same reduction obtained in \cite{sidiropoulos2010optimal},
only here we don't require a drawing of the input graph.
For completeness, we state precisely the reduction, as given in \cite{sidiropoulos2010optimal} (see also \cite{Bar96}).
Let $V$ be a set,
${\cal I}\subset \mathbb{R}_+^{V\times V}$ a set of non-negative vectors corresponding to all feasible solutions for a minimization problem, and $c \in \mathbb{R}_+^{V\times V}$.
Then, we define the \emph{linear minimization problem} $({\cal I}, c)$ to be the computational problem where we are given a graph $G=(V,E)$, and we are asked to find $s\in {\cal I}$, minimizing
\[
\sum_{\{u,v\}\in V\times V} c_{u,v} \cdot s_{u,v} \cdot d(u,v)
\]

Observe that this definition captures a very general class of problems.
For example, MST can be encoded by letting ${\cal I}$ be the set of indicator vectors of the edges of all spanning trees on $V$, and $c$ the all-ones vector.
Similarly, one can easily encode problems such as TSP, Facility-Location, $k$-Server, Bi-Chromatic Matching, etc.

The main Corollary of our embedding result can now be stated as follows.

\begin{corollary}\label{cor:opt}
Let $\Pi=({\cal I}, c)$ be a linear minimization problem.  If there exists a polynomial-time $\alpha$-approximation algorithm for $\Pi$ on planar graphs, then there exists a randomized polynomial-time $O(\alpha\cdot \log (g+1))$-approximation algorithm for $\Pi$ on graphs of genus $g>0$, even when the drawing of the input graph is unknown.
\end{corollary}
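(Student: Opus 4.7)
The plan is a standard Bartal-style reduction using the stochastic embedding guaranteed by Theorem~\ref{tim:main}; essentially the same reduction is carried out in \cite{sidiropoulos2010optimal}, and the only new ingredient here is that Theorem~\ref{tim:main} removes the requirement of being given a genus-$g$ drawing of $G$ as part of the input.

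Given an instance $(G,c)$ of $\Pi$ with $G$ of genus $g>0$, the algorithm would proceed as follows. First, I would invoke Theorem~\ref{tim:main} to sample, in polynomial time, a planar graph $H$ together with a non-contractive map $f\colon V(G)\to V(H)$ satisfying $\mathbb{E}\bigl[d_H(f(u),f(v))\bigr]\le D\cdot d_G(u,v)$ for every $u,v\in V(G)$, where $D=O(\log(g+1))$. Second, I would run the hypothesized planar $\alpha$-approximation on $H$ using the pulled-back cost vector, obtain a feasible $\tilde s\in{\cal I}$, and output $\tilde s$ as the solution on $G$.

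For the analysis, let $s^\star$ be an optimum for $\Pi$ on $G$. Non-contractiveness of $f$ gives
\[
\sum_{u,v} c_{u,v}\,\tilde s_{u,v}\,d_G(u,v) \;\le\; \sum_{u,v} c_{u,v}\,\tilde s_{u,v}\,d_H(f(u),f(v)),
\]
and the $\alpha$-approximation guarantee on $H$, combined with the fact that $s^\star$ is itself a feasible solution for the $H$-instance, bounds the right-hand side by $\alpha\sum_{u,v} c_{u,v}\,s^\star_{u,v}\,d_H(f(u),f(v))$. Taking expectation over the sampled embedding and applying the expected-expansion bound then yields $\mathbb{E}\bigl[\mathrm{cost}_G(\tilde s)\bigr]\le \alpha D\cdot \mathrm{cost}_G(s^\star)=O(\alpha\log(g+1))\cdot \mathrm{cost}_G(s^\star)$, as claimed.

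The corollary itself presents no real obstacle; all of the difficulty has been absorbed into Theorem~\ref{tim:main}. The only minor point worth checking is that the planar $\alpha$-approximation algorithm may require a planar embedding of $H$ in order to run, but such an embedding can be computed in polynomial (indeed linear) time from $H$ alone, so this does not affect the overall polynomial runtime or the approximation guarantee.
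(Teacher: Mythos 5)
Your proposal is correct and matches the paper's (implicit) argument: the paper states the corollary as the standard Bartal-style reduction from \cite{sidiropoulos2010optimal,Bar96} applied to Theorem~\ref{tim:main}, which is exactly the sample-embed-solve-and-pull-back argument you spelled out, using that feasibility of $s\in{\cal I}$ is independent of the metric so the optimum $s^\star$ remains feasible on the planar instance. No gaps; the expectation bound $\mathbb{E}[\mathrm{cost}_G(\tilde s)]\le \alpha D\cdot\mathrm{cost}_G(s^\star)$ with $D=O(\log(g+1))$ is precisely the intended guarantee.
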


\subsection{Overview of the Algorithm}
We now give a high-level overview of our algorithm.
Consider a graph $G=(V,E)$.
Let us say that a collection ${\cal P}$ of shortest paths in $G$ is a \emph{planarizing set of paths}, if the graph $G\setminus \bigcup_{P\in {\cal P}} V(P)$ is planar.
It was shown by Sidiropoulos \cite{sidiropoulos2010optimal} that any graph having a planarizing set of paths of size $k$, admits a stochastic embedding into planar graphs, with distortion $O(\log k)$.
Moreover, given such a set of planarizing paths, the embedding can be computed in polynomial time.
It follows by the work of Eppstein \cite{eppstein2003dynamic}, and Erickson and Whittlesey \cite{erickson2005greedy}, that for any graph $G$ of genus $g$, that there exists a planarizing set of paths, of size $O(g)$.
However, all known algorithms for computing this planarizing set require a drawing of the graph into a surface of genus $g$.
Since we don't know how to compute a drawing of a graph into a minimum-genus surface in polynomial time, all known algorithms are not applicable in our case.

Our main technical contribution is showing how to compute in polynomial time a planarizing set of paths 
of approximately optimal size (up to a $\operatorname{polylog} n$ factor) in an arbitrary graph.
For a graph $G$, we say that a collection ${\cal Q}$ of shortest paths having a common endpoint is a \emph{balanced set of paths} if $\bigcup_{Q\in {\cal Q}} V(Q)$ is a balanced vertex-separator of $G$. That is, removing all paths in ${\cal Q}$ from $G$, leaves a graph where every connected component is at most half the size of $G$.
Our high-level approach is as follows. We find and remove a ``small'' balanced set of paths in $G$. 
Then we compute connected components in the obtained graph.
In each non-planar connected component, we again find and remove a balanced set of paths. We 
repeat this procedure until all components are planar. Finally, we output the planarizing set of paths 
that consists of all paths that we removed from the graph.

In order for this approach to work, we first prove that in a (possibly vertex-weighted) graph $G$ of genus $g$, 
there exists a balanced set $Q$ of paths of size $O(g)$.
Next, we show how to compute in polynomial time a balanced set of paths of approximately optimal size in an arbitrary graph $G$.
As outlined above, we then recursively use this as a subroutine to find a set ${\cal P}$ of planarizing paths.
We begin with a graph $G$ of genus $g$ (for which we don't have a drawing into a genus-$g$ surface), and 
inductively build ${\cal P}$ in steps.
At the first step, we compute a balanced set ${\cal Q}_1$ of paths in $G$.
We add these paths to ${\cal P}$.
At every subsequent step $i>1$, let $G_i$ be the graph obtained from $G$ after removing all the paths we have computed so far, i.e.~$G_i=G\setminus \bigcup_{P\in {\cal P}} V(P)$.
Since $G$ has genus $g$, graph $G_i$ has at most $O(g)$ non-planar connected components.
For every such non-planar component, we compute a balanced set of paths and add it to ${\cal P}$.
We show that after every step, the size of the largest non-planar component reduces by at least a constant factor.
Therefore, after $O(\log n)$ steps, we obtain the desired planarizing set of paths.

\subsection{Related Work}

Inspired by Bartal's stochastic embedding of general metrics into
trees \cite{Bar96}, Indyk and Sidiropoulos~\cite{indyk_genus} showed that every metric on a graph of genus $g$ can be stochastically embedded into a planar graph with distortion $2^{O(g)}$
(see Section \ref{sec:prelims} for a formal definition of
stochastic embeddings).
The above bound was later improved by Borradaile, Lee, and Sidiropoulos \cite{BLS09}, who obtained an embedding with distortion $g^{O(1)}$.
Subsequently, Sidiropoulos \cite{sidiropoulos2010optimal} gave an embedding with distortion $O(\log g)$, matching the $\Omega(\log g)$ lower bound from \cite{BLS09}.
The embeddings from \cite{indyk_genus}, and \cite{sidiropoulos2010optimal} can be computed in polynomial time, provided that the drawing of the graph into a small genus surface is given.
Computing the embedding from \cite{BLS09} requires solving an NP-hard problem, even when the drawing is given.

\subsection{Preliminaries}\label{sec:prelims}
Throughout the paper, we consider graphs
with non-negative edge lengths.
For a tree $T$ with root $r\in V(T)$, and for $v\in V(T)$ we denote by $T(v)$ the unique path in $T$ between $v$ and $r$.

\paragraph{Graphs on surfaces}
Let us recall some notions from topological graph theory (an in-depth
exposition can be found in \cite{MoharT-book}).  A \emph{surface} is a
compact connected 2-dimensional manifold, without boundary.
For a graph $G$ we can
define a one-dimensional simplicial complex $C$ associated with $G$ as
follows: The $0$-cells of $C$ are the vertices of $G$, and for each
edge $\{u,v\}$ of $G$, there is a $1$-cell in $C$ connecting $u$ and
$v$.  A \emph{drawing} of $G$ on a surface $S$ is a continuous injection
$f:C\rightarrow S$.
The \emph{genus} of a surface ${\cal S}$ is the maximum cardinality of a collection
of simple closed non-intersecting curves $C_1,\dots,C_k$ in $\cal S$, such that ${\cal S} \setminus (C_1 \cup \dots \cup C_k)$ is connected.
The genus of a graph $G$ is the minimum $k$, such that $G$ can be drawn into a surface of genus $k$.
Note that a graph of genus $0$ is a planar graph.
We remark that we make no distinction between orientable, and non-orientable genus, since all of our results hold in both settings.

\paragraph{Metric embeddings}
A mapping $f : X \to Y$ between two metric spaces $(X,d)$ and $(Y,d')$
is {\em non-contracting} if $d'(f(x),f(y)) \geq d(x,y)$ for all $x,y \in X$.
If $(X,d)$ is any finite metric space, and $\mathcal Y$
is a family of finite metric spaces, we say that {\em $(X,d)$ admits a stochastic $D$-embedding into $\mathcal Y$} if there exists a random metric space $(Y,d') \in \mathcal Y$ and a random
non-contracting mapping $f : X \to Y$ such that for every $x,y \in X$,
\begin{equation}
\label{eq:expansion}
\mathbb E\left[\vphantom{\bigoplus} d'(f(x),f(y))\right] \leq D \cdot d(x,y).
\end{equation}
The infimal $D$ such that \eqref{eq:expansion} holds is the {\em distortion of
the stochastic embedding.}
A detailed exposition of  results on metric embeddings can be found in \cite{I-survey} and \cite{Matousek-book}.

\section{Path Separators in Embedded Graphs}

For a graph $G$, a real $\alpha\in (0,1/2]$, and a set $X\subseteq V(G)$ we say that $X$ is an \emph{$\alpha$-balanced vertex separator} for $G$ if every connected component of $G\setminus X$ contains at most $\alpha\cdot |V(G)|$ vertices.
It is also called simply balanced vertex separator, when $\alpha=1/2$.

For a vertex-weighted graph $G$ with weight function $w:V(G)\to \mathbb{R}_{\geq 0}$, for every $Y\subseteq V(G)$ we use the notation $w(Y) = \sum_{v\in V(G)} w(v)$.
Similarly to the unweighted case, we say that a set $X\subseteq V(G)$ is a balanced vertex separator for a weighted graph
 $(G,w)$ if for every connected component $C$ of $G\setminus X$ we have $w(V(C)) \leq w(V(G)) / 2$. 

\begin{theorem}[Lipton \& Tarjan \cite{lipton1979separator}, Thorup \cite{thorup2004compact}]\label{thm:planar_separators}
Let $G$ be a planar graph, let $r\in V(G)$, and let $T$ be a spanning tree of $G$ with root $r$.
Then, there exist $v_1, v_2, v_3\in V(G)$, such that $V(T(v_1)\cup T(v_2) \cup T(v_3))$ is a balanced vertex separator for $G$.
Moreover, the vertices $v_1, v_2$ and $v_3$ can be computed in polynomial time.
\end{theorem}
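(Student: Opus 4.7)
My plan is to follow the classical Lipton--Tarjan argument, reorganized to keep track of root-to-vertex tree paths rather than arbitrary vertex sets, and to handle the weighted setting via a centroid argument on the dual tree.

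First I would reduce to the case where $G$ is $2$-connected and every face of its planar embedding is a triangle; this is legitimate because adding edges cannot destroy any vertex separator, and $T$ remains a spanning tree of the enlarged graph. In this setting the \emph{tree--cotree decomposition} applies: the edges of $G$ not in $T$ form a spanning tree $T^\star$ of the dual graph $G^\star$. For every non-tree edge $e=\{u,v\}$, the fundamental cycle $C_e = T(u)\cup T(v)\cup\{e\}$ is a simple closed curve on the sphere, so by the Jordan curve theorem it partitions the remaining vertices of $G$ into an ``inside'' set $A_e$ and an ``outside'' set $B_e$. Concretely, deleting the dual edge $e^\star$ from $T^\star$ splits $T^\star$ into two subtrees whose face sets are exactly the faces enclosed on the two sides of $C_e$, so the weights $w(A_e)$ and $w(B_e)$ can be computed by a single bottom-up traversal of $T^\star$.

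Next I would invoke a standard centroid argument on the weighted tree $T^\star$: walk from an arbitrary root of $T^\star$ always into the heavier subtree until the heaviest subtree has weight at most $w(V(G))/2$. This identifies a non-tree edge $e=\{u,v\}$ such that $C_e$ separates $G$ into two sides of weight at most, say, $\tfrac{2}{3}w(V(G))$ each. Consequently $V(T(u))\cup V(T(v))$ is already a $\tfrac{2}{3}$-balanced vertex separator, which gives the theorem with two paths but with the wrong balance constant.

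To upgrade from $\tfrac{2}{3}$ to $\tfrac{1}{2}$ while using only three paths, I would recurse inside the heavier side $H$ of $C_e$ (of weight at most $\tfrac{2}{3}w(V(G))$), using the same spanning tree $T$ restricted to the subgraph induced by $H\cup V(C_e)$ and keeping the root $r$. Repeating the dual-tree centroid step inside $H$ yields a second fundamental cycle whose boundary consists of two more $r$-to-vertex paths in $T$; because both new paths share the root $r$ with $T(u),T(v)$, a routine case analysis (depending on which of $T(u),T(v)$ the new paths branch off from) shows that three of the four resulting paths already contain the fourth, so the combined separator can be written as $V(T(v_1)\cup T(v_2)\cup T(v_3))$ for appropriate $v_1,v_2,v_3$. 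Each operation (triangulation, building $T^\star$, subtree weight accumulation, centroid walk, recursion on one side) runs in polynomial time.

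The main obstacle I expect is the bookkeeping in the last step: verifying that the second application of the centroid argument really yields three paths rather than four, and that the two Jordan curves interact on the sphere so that removing $V(T(v_1))\cup V(T(v_2))\cup V(T(v_3))$ leaves no component of weight exceeding $\tfrac{1}{2}w(V(G))$. This is essentially the content of Thorup's refinement of the Lipton--Tarjan construction and requires a careful topological argument about how the two fundamental cycles share the tree paths emanating from $r$.
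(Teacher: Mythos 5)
You are judging a statement that the paper itself does not prove---it is quoted from Lipton--Tarjan and Thorup---so your attempt has to stand on its own. The first stage of your plan (triangulate, tree--cotree decomposition, centroid walk on the dual tree to find one non-tree edge $e_1=\{u_1,v_1\}$ whose fundamental cycle is a $2/3$-balanced separator) is the classical fundamental-cycle argument and is fine, modulo the standard care needed because weights live on vertices rather than faces. The genuine gap is exactly the step you flag as ``routine case analysis'': the claim that the four root paths $T(u_1),T(v_1),T(u_2),T(v_2)$ produced by a second fundamental cycle inside the heavy side collapse to three. There is no reason for this. The endpoints $u_2,v_2$ of the second centroid edge are merely vertices of $H\cup V(C_{e_1})$; unless one of them happens to lie on $C_{e_1}$, or to be an ancestor or descendant of $u_1$ or $v_1$ in $T$, the four root paths are pairwise incomparable, and nothing in the centroid step forces such a coincidence. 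So your construction, once the bookkeeping is done, yields a $1/2$-balanced separator that is the union of \emph{four} root paths, not three. (Incidentally, a four-path version would still suffice for the way the paper uses this theorem---it would change $4g+3$ to $4g+4$ in Lemma~\ref{lem:genus_separator_weighted} without affecting any asymptotic bound---but it does not prove the statement as written.) A second, smaller and fixable, issue: in the recursion you must zero out the weights of $V(C_{e_1})$ (those vertices are already in the separator); otherwise the $2/3$-balance inside $G[H\cup V(C_{e_1})]$ is measured relative to $w(H)+w(V(C_{e_1}))$, which can exceed $\tfrac{2}{3}w(V(G))$, and the resulting components are not guaranteed to weigh at most $\tfrac12 w(V(G))$.

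The known proof avoids the two-stage recursion altogether, which is why it gets three paths. In Thorup's argument one triangulates and runs a single centroid-style walk over the \emph{faces} of the triangulation (the nodes of the dual tree): for a candidate triangle $f$ with corners $x,y,z$, the three root paths $T(x),T(y),T(z)$ together with the edges of $f$ cut the sphere into regions; if some region has weight exceeding $\tfrac12 w(V(G))$, one moves $f$ across the edge of $f$ bounding that region, and a monotonicity/termination argument shows the walk reaches a face for which every region, and hence every component of $G\setminus V(T(x)\cup T(y)\cup T(z))$, has weight at most half. If you want to salvage your write-up, replacing the ``second cycle plus case analysis'' step with this face-centroid argument is the way to do it; as written, the three-path claim is unsupported.
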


We will use a slight modification of Theorem \ref{thm:planar_separators}, for the case of weighted graphs.
The proof is a straightforward extension to the one due to Thorup \cite{thorup2004compact}, which is based on the argument of Lipton and Tarjan \cite{lipton1979separator}.

\begin{lemma}\label{lem:weighted_planar_separators}
Let $G$ be a planar graph, let $r\in V(G)$, and let $T$ be a spanning tree of $G$ with root $r$.
Let $w:V(G) \to \mathbb{R}_{\geq 0}$.
Then, there exist $v_1, v_2, v_3\in V(G)$, such that $V(T(v_1)\cup T(v_2) \cup T(v_3))$ is a balanced vertex separator for $(G,w)$.
Moreover, the vertices $v_1, v_2$ and $v_3$ can be computed in polynomial time.
\end{lemma}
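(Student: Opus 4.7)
The plan is to mimic Thorup's argument for the unweighted case (which itself is a refinement of Lipton--Tarjan), but everywhere replace cardinality counts by the weight function $w$. The only non-trivial thing to verify is that the ``critical edge'' combinatorial lemma at the heart of Lipton--Tarjan is genuinely an intermediate-value statement about any additive quantity on faces, so it goes through with weights.

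First, I would reduce to the triangulated case: embed $G$ in the plane and add dummy edges (with infinite length, or just ignored) so that every face is a triangle, without changing $V(G)$ or $T$. I set $w=0$ on any auxiliary vertices if we must subdivide, but the standard triangulation only adds edges, so $w$ is untouched. In this triangulated graph, every non-tree edge $e=\{u,v\}$ together with the tree paths $T(u)$ and $T(v)$ forms a \emph{fundamental cycle} $C_e$ that bounds a Jordan region and thus partitions $V(G)\setminus V(C_e)$ into an interior vertex set $A_e$ and an exterior vertex set $B_e$.

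Next I would run the classical critical-edge argument on the planar dual, now measuring with $w$. Order the non-tree edges so that moving from one to the next across a triangular face changes $w(A_e)$ and $w(B_e)$ by at most the weight of a single vertex. Because $w(A_e)+w(B_e)+w(V(C_e))=w(V(G))$ is constant and a swap between $A_e$ and $B_e$ happens only by moving one vertex at a time through the boundary, a discrete intermediate-value argument yields a non-tree edge $e^\star=\{u^\star,v^\star\}$ with $w(A_{e^\star}),w(B_{e^\star})\leq \tfrac{2}{3}w(V(G))$. The two tree paths $T(u^\star)$ and $T(v^\star)$ together with the single edge $e^\star$ then form a $\tfrac{2}{3}$-balanced vertex separator.

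To push $\tfrac{2}{3}$-balance up to $\tfrac{1}{2}$-balance using a \emph{third} tree path, I would identify the unique component $H$ of $G\setminus V(T(u^\star)\cup T(v^\star))$ whose weight exceeds $\tfrac{1}{2}w(V(G))$ (if none exists, two paths already suffice and we set $v_3$ arbitrarily). Inside $H$, applied to the restriction of $T$ and the restriction of $w$ (with vertices outside $H$ assigned weight $0$ so that the total balance condition is phrased with respect to $w(V(G))$), a second application of the critical-edge argument produces a tree path $T(v_3)$ whose removal from $H$ leaves only pieces of weight at most $\tfrac{1}{2}w(V(G))$. Combining $T(v_1)=T(u^\star)$, $T(v_2)=T(v^\star)$, and $T(v_3)$ gives the desired balanced separator.

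The main obstacle I anticipate is the second application: the ``graph'' $H$ is not in general nicely triangulated or connected as required by the first step, and we need to make sure the tree-path structure in $H$ is still inherited from $T$ (the path $T(v_3)$ is a path in $T$, not merely in a BFS tree of $H$). The clean fix is to run the critical-edge argument on $G$ itself but restricted to those non-tree edges whose fundamental cycles separate $H$, which is legitimate because $H$ sits in a single face of the planar drawing of $T(v_1)\cup T(v_2)$. All steps are clearly polynomial: triangulation, fundamental cycle construction, and the critical-edge search each cost $O(n^2)$ or better, and the recursion depth is one.
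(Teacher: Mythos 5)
Your step (2/3 with one fundamental cycle) is fine in substance -- Lipton--Tarjan's Lemma~2 is already stated for vertex costs, so the weighted version of the cycle argument is not the issue (though your claim that moving across a triangular face changes $w(A_e),w(B_e)$ ``by at most the weight of a single vertex'' is not accurate: the tree paths, and hence the enclosed weight, can jump by a lot, and the genuine argument is a more careful descent). The genuine gap is the patching step. A second application of the critical-edge argument returns a non-tree edge, i.e.\ \emph{two} root-paths, so applied inside $H$ with both endpoints in $H$ it gives four paths in total, not three. If instead you restrict to edges with one endpoint already on the removed separator $S=V(T(u^\star)\cup T(v^\star))$ (the only way a fundamental cycle contributes a single new root-path, since for $v\notin H$ the path $T(v)$ never enters $H$: once it meets $S$ it stays in $S$), then one extra path simply need not suffice. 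Concretely: let the interior of the chosen cycle carry weight $2/3\,w(V(G))$ and let $T$ restricted to it be a spider -- a center $c$ of negligible weight hanging off one boundary vertex, with ten legs each of weight $w(V(G))/15$, the leg tips joined to one another by non-tree edges (then triangulate). Both sides of the cycle weigh at most $2/3$, so your first step can return exactly this cycle; but any single additional root-path removes $c$ and part of one leg, leaving a connected piece of weight at least $2/3-1/15>1/2$. So ``$2/3$ with two paths, then one path to reach $1/2$'' is not a valid reduction, and this is precisely why the known proofs do not proceed that way.

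The route the paper has in mind (Thorup's, refining Lipton--Tarjan) gets all three paths at once: after triangulating, the non-tree edges form a spanning tree of the dual (interdigitating trees); pick a \emph{centroid} triangle $t=\{v_1,v_2,v_3\}$ of this dual tree, weighting each dual node by the primal weight it accounts for. Every component of $G\setminus V(T(v_1)\cup T(v_2)\cup T(v_3))$ lies inside the region corresponding to one subtree of the dual tree hanging off $t$, and the centroid choice bounds each such region by half the total weight. That argument is where the improvement from $2/3$ to $1/2$ really comes from, and it is also what makes the polynomial-time claim immediate (computing a weighted centroid of a tree). Your proposal, as written, does not reach the $1/2$ bound of the lemma.
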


The next Theorem follows by the work of Eppstein \cite{eppstein2003dynamic}, and Erickson \& Whittlesey \cite{erickson2005greedy}.

\begin{theorem}[Erickson \& Whittlesey \cite{erickson2005greedy}, Eppstein \cite{eppstein2003dynamic}]\label{thm:planarization_spanning}
Let $G$ be a graph of genus $g>0$, and let $\phi$ be an embedding of $G$ into a surface ${\cal S}$ of genus $g$.
Let $r\in V(G)$, and let $T$ be a spanning tree of $G$ with root $r$.
Then, there exist edges $\{x_1,y_1\},\ldots,\{x_{2g},y_{2g}\}\in E(G)$, such that
$G \setminus \bigcup_{i=1}^{2g} V(T(x_i)\cup T(y_i))$ is planar.
Moreover, the topological space ${\cal S} \setminus \bigcup_{i=1}^{2g} \phi(T(x_i)\cup T(y_i)\cup \{x_i,y_i\})$ is homeomorphic to an open disk.
\end{theorem}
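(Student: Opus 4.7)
The approach I would take is the classical tree-cotree decomposition of an embedded graph. We may assume without loss of generality that $\phi$ is cellular (i.e., every face is an open disk), since any embedding of a genus-$g$ graph into a genus-$g$ surface is cellular. Form the dual graph $G^*$ and set $T^* = \{e^* : e \in E(T)\}$. The subgraph $G^* \setminus T^*$ is connected, because any disconnection would correspond to a cycle of $G$ contained in $T$. Fix a spanning tree $C^*$ of $G^* \setminus T^*$ and let $X$ be the set of edges of $G$ whose duals lie in neither $T^*$ nor $C^*$. Euler's formula together with $|E(T)| = |V(G)|-1$ and $|E(C^*)| = |V(G^*)|-1$ gives $|X| = 2g$. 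Write $X = \{\{x_1,y_1\},\dots,\{x_{2g},y_{2g}\}\}$ and $H = \bigcup_{i=1}^{2g}\bigl(T(x_i) \cup T(y_i) \cup \{x_i,y_i\}\bigr)$; equivalently, $H = T' \cup X$, where $T' \subseteq T$ is the subtree consisting of the union of the root paths $T(x_i), T(y_i)$.

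The heart of the proof is the topological claim that $\mathcal{S}\setminus\phi(H)$ is homeomorphic to an open disk. Granted this, planarity of $G \setminus V(H)$ is immediate: the remaining vertices lie outside $\phi(V(H))$, the remaining edges have interiors disjoint from $\phi(E(H))$ because edges do not cross in the embedding, so the induced drawing of $G \setminus V(H)$ lies in the open disk $\mathcal{S} \setminus \phi(H)$ and is therefore planar. To establish the disk claim I would combine connectedness with an Euler-characteristic computation. For connectedness, the standard tree-cotree statement says that $\mathcal{S} \setminus \phi(T \cup X)$ is already an open disk: cutting along $T \cup X$ merges all faces of $G$ along the dual spanning tree $C^*$, producing a single disk. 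Since $\phi(H) \subseteq \phi(T \cup X)$, the set $\mathcal{S} \setminus \phi(H)$ contains this open disk and sits inside its closure $\mathcal{S}$, so it is connected. For the Euler characteristic, $H$ is a connected $1$-complex with $\chi(H) = |V(T')| - (|V(T')|-1 + 2g) = 1 - 2g$, and a regular-neighborhood computation yields $\chi(\mathcal{S}\setminus\phi(H)) = \chi(\mathcal{S}) - \chi(\phi(H)) = (2-2g) - (1-2g) = 1$. A connected open $2$-manifold of Euler characteristic $1$ is an open disk.

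The main obstacle I anticipate is that the cut graph $H$ is strictly smaller than the classical tree-cotree cut set $T \cup X$: it omits those subtrees of $T$ that hang off $T'$ and are not traversed by any loop. One therefore cannot directly invoke the tree-cotree disk result for $T \cup X$ to conclude. The Euler-characteristic calculation resolves this, using the fact that the omitted branches form a forest attached to $T'$ at vertices and so contribute nothing to $\chi$, whence the complement retains the same Euler characteristic as if one had used the full tree-cotree cut set.
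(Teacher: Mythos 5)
Your proof is correct and follows essentially the same route as the sources the paper cites for this theorem (the paper itself gives no proof, attributing it to Eppstein's tree--cotree partition and Erickson--Whittlesey's system of loops): a tree--cotree decomposition producing $2g$ leftover edges, whose induced loops through the root cut the surface into a disk, with the disk claim verified by connectedness plus an Euler-characteristic count. Your observation that the cut graph omits the branches of $T$ hanging off the union of root paths, and your Euler-characteristic fix for it, is exactly the standard way this is handled, so no gap to report.
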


We are now ready to prove the main result of this section.

\begin{lemma}[Existence of path separators in embedded graphs]\label{lem:genus_separator_weighted}
Let $G$ be a weighted graph of genus $g$, with weight function $w:V(G)\to \mathbb{R}_{\geq 0}$.
Let $r\in V(G)$, and let $T$ be a spanning tree of $G$ with root $r$.
Then, there exists $X\subseteq V(G)$, with $|X|\leq 4g+3$, such that $\bigcup_{u\in X} V(T(u))$ is a balanced vertex separator for $(G,w)$.
\end{lemma}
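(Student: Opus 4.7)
The plan is to combine Theorem~\ref{thm:planarization_spanning} with the weighted planar separator Lemma~\ref{lem:weighted_planar_separators}, using the disk provided by the former to apply the latter with a tree lifted from $T$. (The case $g=0$ is immediate from Lemma~\ref{lem:weighted_planar_separators}, so assume $g\ge 1$.) Fix any embedding $\phi$ of $G$ into a genus-$g$ surface $\mathcal{S}$; its existence is guaranteed and the lemma only asserts existence of $X$. By Theorem~\ref{thm:planarization_spanning}, there are $2g$ edges $\{x_i,y_i\}$ such that $\mathcal{S}$ minus the closed curves $T(x_i)\cup\{x_i,y_i\}\cup T(y_i)$ is an open disk $D$. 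Set $A = \bigcup_{i=1}^{2g} V(T(x_i)\cup T(y_i))$ and $G' = G\setminus A$; then $G'$ is planar (drawn inside $D$), and every vertex of $A$, including the root $r$, lies on $\partial D$.

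If every connected component of $G'$ has $w$-weight at most $w(V(G))/2$ then $A$ is itself balanced and I set $X = \{x_1,y_1,\ldots,x_{2g},y_{2g}\}$, giving $|X|\le 4g$. Otherwise there is a unique component $C$ of $G'$ with $w(V(C)) > w(V(G))/2$. Call $\rho\in V(C)$ a \emph{quasi-root} if its parent in $T$ lies in $A$. Since $A$ is closed under taking $T$-ancestors, for every $v \in V(G')$ the set $V(T(v))\cap V(G')$ is a contiguous suffix of $T(v)$ running from $v$ up to its quasi-root; when $v\in V(C)$ this suffix stays inside $V(C)$ because consecutive vertices on it are $T$-adjacent and hence in the same $G'$-component. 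Now build an auxiliary planar graph $C^+$: introduce a new vertex $r^*$ of weight $0$ in the face of the drawing of $C$ that meets $\partial D$, and join $r^*$ to every quasi-root of $C$. Let $T^+$ be the spanning tree of $C^+$ made of the forest $T$ restricted to $V(C)$ together with the new edges $r^*\rho$, rooted at $r^*$; the top vertex of each tree of $T\cap V(C)$ is a quasi-root, so $T^+$ is indeed a spanning tree.

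Finally, apply Lemma~\ref{lem:weighted_planar_separators} to $(C^+, T^+, r^*, w)$ (with $w(r^*)=0$) to obtain $v_1,v_2,v_3$ whose root paths form a balanced separator of $(C^+, w)$. By construction $V(T^+(v_j))\setminus\{r^*\} = V(T(v_j))\cap V(C)$, so taking $X = \{x_1,y_1,\ldots,x_{2g},y_{2g}\}\cup\{v_j:v_j\ne r^*\}$ gives $|X|\le 4g+3$ and $\bigcup_{u\in X}V(T(u)) = A \cup \bigcup_j V(T(v_j))$. The connected components of $G$ minus this set are either components of $G'$ other than $C$, which are light by the case hypothesis, or the non-$r^*$ components of $C^+\setminus\bigcup_j V(T^+(v_j))$, which have weight at most $w(V(C^+))/2\le w(V(G))/2$. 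The step requiring the most care is the planarity of $C^+$, i.e.\ the claim that all quasi-roots of $C$ lie on the single face of $C$ meeting $\partial D$: each quasi-root is incident in $G$ to a $T$-edge whose interior lies in $D$ and whose other endpoint sits on $\partial D$, forcing the quasi-root onto the outer face. It is exactly this one-disk complement promised by Theorem~\ref{thm:planarization_spanning} that lets the planar separator lemma be invoked with a tree extracted from $T$.
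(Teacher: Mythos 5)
Your proposal is correct and follows essentially the same route as the paper's proof: cut along the $2g$ tree-path pairs given by Theorem~\ref{thm:planarization_spanning}, then apply Lemma~\ref{lem:weighted_planar_separators} to a planar auxiliary graph with an artificial root of weight zero, and translate the three root-paths back to paths of $T$. The only difference is how the auxiliary planar graph is built: the paper contracts all of $H=\bigcup_i \left(T(x_i)\cup T(y_i)\cup\{x_i,y_i\}\right)$ to a single root $r'$ (planarity being immediate because ${\cal S}\setminus\phi(H)$ is a disk), whereas you delete $V(H)$ and attach an apex $r^*$ to the quasi-roots of the unique heavy component, which needs your additional (correct) outer-face argument but yields the same conclusion.
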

\begin{proof}
The case $g=0$ follows by Lemma \ref{lem:weighted_planar_separators}, so we may assume that $g>0$.
Fix an embedding $\phi$ of $G$ into a surface ${\cal S}$ of genus $g$.
By Theorem \ref{thm:planarization_spanning} there exist $\{x_1,y_1\},\ldots,\{x_{2g},y_{2g}\}\in E(G)$, such that the topological space ${\cal S}\setminus \bigcup_{i=1}^{2g} \phi(T(x_i)\cup T(y_i)\cup \{x_i,y_i\})$ is homeomorphic to an open disk.
Let
\[
H = \bigcup_{i=1}^{2g} T(x_i)\cup T(y_i)\cup \{x_i,y_i\}.
\]
Note that $r\in V(H)$.
Let $G'$ be the graph obtained from $G$ by contracting $H$ into a single vertex $r'$.
Since ${\cal S}\setminus \phi(H)$ is an open disk, it follows that $G'$ is planar.

Let $T'$ be the subgraph of $G'$ induced by $T$ after contracting $H$.
Since $T$ is a spanning subgraph of $G$, it follows that $T'$ is a spanning subgraph of $G'$.
Indeed, the set of vertices $V(H)$ spans is a connected subtree of $T$.
Therefore, after contracting $H$, the subgraph $T'$ induced by $T$ is still a tree.
Thus, $T'$ is a spanning subtree of $G'$.
We consider $T'$ being rooted at $r'$.

Define a weight function $w':V(G')\to \mathbb{R}_{\geq 0}$ such that for every $v\in V(G')$,
\[
w'(v) = \left\{ \begin{array}{ll}
  w(v), & \mbox{ if } v\neq r'\\
  0, & \mbox{ if } v=r'
\end{array}
\right.
\]
By Lemma \ref{lem:weighted_planar_separators} it follows that there exist $v_1,v_2,v_3\in V(G')$ such that $V(T'(v_1)\cup T'(v_2)\cup T'(v_3))$ is a balanced vertex separator for $(G',w')$.

Let $J = G\setminus V(H)$.
Observe that $J = G\setminus V(H) = G' \setminus \{r'\}$.
Moreover, for any $v\in V(J)$, we have $T(v)\cap J = T'(v)\cap J$.
Thus, the set of connected components of $(G \setminus V(H)) \setminus V(T(v_1)\cup T(v_2)\cup T(v_3))$ is the same as the set of connected components of $(G' \setminus \{r'\}) \setminus V(T'(v_1)\cup T'(v_2)\cup T'(v_3))$.
Let $C$ be a connected component of $(G \setminus V(H)) \setminus V(T(v_1)\cup T(v_2)\cup T(v_3))$.
We have
\[
w(C) = w'(C) \leq \frac{1}{2} w(V(G')) = \frac{1}{2} (w(V(G)) - w(V(H))) \leq \frac{1}{2} w(V(G)).
\]
Thus, $V(T(v_1) \cup T(v_2)\cup T(v_3)) \cup \bigcup_{i=1}^{2g} V(T(x_i) \cup T(y_i))$ is a balanced vertex separator for $(G,w)$, as required.
\qed
\end{proof}

\section{Computing Path Separators in Arbitrary Graphs}
Recall the definition of a 
caterpillar decomposition of a tree.
\begin{definition}[Caterpillar decomposition \cite{matousek1999trees,charikar2002dimension}]
A \emph{caterpillar decomposition}
of a rooted tree $T$ is a family of paths ${\cal P} = \{P_i\}$, satisfying the following conditions:
\begin{description}
\item{(i)}
Every $P_i\in {\cal P}$ is a subpath of a root--leaf path.
\item{(ii)}
For every $P_i\neq P_j\in {\cal P}$, we have $V(P_i)\cap V(P_j)=\emptyset$.
\item{(iii)}
$V(T) = \bigcup_{P_i\in {\cal P}} V(P_i)$.
\end{description}
\end{definition}
The proof of the following lemma about caterpillar decompositions can be found in \cite{matousek1999trees,charikar2002dimension}.

\begin{lemma}[See \cite{matousek1999trees,charikar2002dimension}]\label{lem:computing_caterpillar}
For every rooted tree $T$, there exists a caterpillar decomposition ${\cal P}$, such that every root--leaf path $T(u)$ crosses
at most $O(\log n)$ paths from $\cal P$.
Moreover, this decomposition can be found in polynomial time.
\end{lemma}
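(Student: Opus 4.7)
The plan is to use the classical heavy-path decomposition of Sleator and Tarjan, which is the standard way to produce such a caterpillar decomposition. For each non-leaf vertex $v\in V(T)$, let $\mathsf{size}(v)$ denote the number of vertices in the subtree of $T$ rooted at $v$. Among the children of $v$, pick one child $c(v)$ whose subtree has maximum size (breaking ties arbitrarily), and call the edge $\{v,c(v)\}$ \emph{heavy}; call all other edges from $v$ to its children \emph{light}. The heavy edges partition $V(T)$ into maximal ``heavy paths'', each of which is a descending path in $T$; let $\mathcal{P}$ be the collection of these heavy paths (adding singleton paths for isolated vertices if necessary).

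First I would verify the three defining properties of a caterpillar decomposition. Each heavy path is by construction a descending path, hence a subpath of some root--leaf path, giving (i). Distinct heavy paths are vertex-disjoint since each vertex $v\neq r$ belongs to the unique heavy path determined by following heavy edges up and down through $v$, yielding (ii). Every vertex lies on some heavy path (possibly a singleton), giving (iii). All of this can clearly be computed in linear time after a single traversal that records subtree sizes.

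The core step is bounding how many paths of $\mathcal{P}$ a single root--leaf path $T(u)$ can cross. Walking down from $r$ to $u$, the walk stays within one heavy path until it uses a light edge, at which point it enters a new heavy path. So it suffices to show that $T(u)$ uses at most $O(\log n)$ light edges. Suppose the walk uses a light edge from $v$ to a child $c\neq c(v)$. By the choice of $c(v)$ as a child of maximum subtree size, $\mathsf{size}(c) \leq \mathsf{size}(c(v))$, and since the subtrees of the children of $v$ are disjoint and all contained in the subtree of $v$, we get $\mathsf{size}(c) \leq \mathsf{size}(v)/2$. Hence along $T(u)$ each light edge at least halves the subtree size of the current vertex. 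Starting from $\mathsf{size}(r)=n$ and ending at $\mathsf{size}(u)\geq 1$, this can happen at most $\log_2 n$ times, so $T(u)$ crosses at most $\log_2 n + 1 = O(\log n)$ heavy paths.

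The main obstacle, if any, is simply being precise about what it means for $T(u)$ to ``cross'' a path of $\mathcal{P}$: since $T(u)$ is itself a descending path and the heavy paths are descending and vertex-disjoint, the intersection of $T(u)$ with any $P_i\in\mathcal{P}$ is a (possibly empty) contiguous subpath, so the number of $P_i$'s crossed equals the number of maximal intervals along $T(u)$, which is exactly one more than the number of light edges it traverses. The rest is the halving argument above, and polynomial-time computability is immediate from the linear-time construction.
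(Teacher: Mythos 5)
Your proof is correct and follows the standard heavy-path (Sleator--Tarjan) construction, which is exactly the argument behind the cited references; the paper itself gives no proof of this lemma, deferring to \cite{matousek1999trees,charikar2002dimension}. The key halving step ($\mathsf{size}(c)\leq\mathsf{size}(v)/2$ across each light edge, hence at most $\log_2 n$ light edges on any root--leaf path) and the verification of the three decomposition properties are all handled properly.
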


We are now ready to prove that main result of this section.

\begin{lemma}[Computing approximate path separators]\label{lem:genus_separator_weighted_approximate}
Let $G$ be a graph, and $w:V(G)\to \mathbb{R}_{\geq 0}$.
Let $r\in V(G)$, and let $T$ be a spanning tree of $G$ with root $r$.
Suppose that there exists $X\subseteq V(G)$, such that $\bigcup_{u\in X} V(T(u))$ is a balanced vertex separator for $(G,w)$.
Then we can compute in polynomial time a set $Y\subseteq V(G)$ 
with $|Y| \leq O(\log^{3/2} n) \cdot |X|$, such that $\bigcup_{u\in Y} V(T(u))$ is a $3/4$-balanced vertex separator for $(G,w)$.
\end{lemma}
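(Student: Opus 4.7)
The plan is to reduce the problem to a standard balanced vertex separator problem on a contracted auxiliary graph in which each caterpillar path becomes a single super-vertex, and then apply a polynomial-time $O(\sqrt{\log n})$-approximation for balanced vertex separators. The $\log^{3/2}n$ factor will arise as the product of the $O(\log n)$ overhead of caterpillar decomposition and the $O(\sqrt{\log n})$ guarantee of an ARV-style balanced separator algorithm (in its bicriteria, Feige--Hajiaghayi--Lee form for vertex separators).

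First I would apply Lemma~\ref{lem:computing_caterpillar} to obtain a caterpillar decomposition $\mathcal{P}=\{P_1,\ldots,P_m\}$ of $T$ in which every root-to-leaf path of $T$ crosses at most $O(\log n)$ paths of $\mathcal{P}$. Since each $P_i$ is a subpath of a root-to-leaf path, for every $u\in V(G)$ the intersection $V(T(u))\cap V(P_i)$ is a (possibly empty) prefix of $P_i$ starting at its root-side endpoint, and $T(u)$ meets only $O(\log n)$ paths of $\mathcal{P}$. Next, I would form the auxiliary graph $G'$ by contracting each $P_i$ to a super-vertex $q_i$ of weight $w'(q_i)=\sum_{v\in V(P_i)}w(v)$. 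A direct check shows that for any $Z\subseteq V(G')$, the components of $G'\setminus Z$ are in weight-preserving bijection with the components of $G\setminus\bigcup_{q_i\in Z}V(P_i)$; hence $Z$ is a $\beta$-balanced vertex separator of $(G',w')$ if and only if $\bigcup_{q_i\in Z}V(P_i)$ is one in $(G,w)$.

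The hypothetical $X$ then produces a small balanced separator of $(G',w')$: let $Z_X=\{q_i:V(P_i)\cap\bigcup_{u\in X}V(T(u))\neq\emptyset\}$. By the caterpillar-crossing bound $|Z_X|\leq O(|X|\log n)$, and since $\bigcup_{q_i\in Z_X}V(P_i)\supseteq\bigcup_{u\in X}V(T(u))$, enlarging a separator only refines its components, so every component of $G\setminus\bigcup_{q_i\in Z_X}V(P_i)$ still has weight at most $w(V(G))/2$. Thus $Z_X$ is a $1/2$-balanced vertex separator of $(G',w')$ of size $O(|X|\log n)$.

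I would then run a polynomial-time bicriteria $O(\sqrt{\log n})$-approximation for balanced vertex separators on $(G',w')$: because a $1/2$-balanced separator of size $s$ exists, it returns a $3/4$-balanced separator $Z^*$ with $|Z^*|=O(s\sqrt{\log n})=O(|X|\log^{3/2}n)$. To lift the solution back to $G$, set $Y=\{b_i:q_i\in Z^*\}$, where $b_i$ denotes the leaf-side endpoint of $P_i$; then $\bigcup_{u\in Y}V(T(u))\supseteq\bigcup_{q_i\in Z^*}V(P_i)$, which is $3/4$-balanced in $(G,w)$ by the correspondence above, and enlarging a $3/4$-balanced separator preserves $3/4$-balance, so $\bigcup_{u\in Y}V(T(u))$ itself is $3/4$-balanced with $|Y|=|Z^*|=O(|X|\log^{3/2}n)$. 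The main obstacle is pinning down the correct bicriteria vertex-separator approximation so that the $\sqrt{\log n}$ loss plugs in cleanly; the rest is combinatorial bookkeeping enabled by the caterpillar contraction.
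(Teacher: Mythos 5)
Your proposal is correct and follows essentially the same route as the paper: contract the caterpillar decomposition into an auxiliary weighted graph, transfer the separator induced by $X$ (losing an $O(\log n)$ factor via the crossing bound), apply the Feige--Hajiaghayi--Lee $O(\sqrt{\log n})$ pseudo-approximation to get a $3/4$-balanced separator, and lift it back using the fact that enlarging a separator preserves balance. The only cosmetic difference is that you take $Y$ to consist of the leaf-side endpoints of the selected paths $P_i$, while the paper picks a leaf of $T$ whose root--leaf path contains $P_i$; both choices give $V(P_i)\subseteq V(T(u))$ and are interchangeable.
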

\begin{proof}
We reduce the problem to the problem of finding a vertex separator in an auxiliary graph.
Using Lemma \ref{lem:computing_caterpillar} we construct a caterpillar decomposition ${\cal P}$ of $T$ such that every root--leaf path $T(u)$ crosses
at most $O(\log n)$ paths from $\cal P$. We define an auxiliary graph ${\cal G}$ on the set $\cal P$ as follows:
$P_i\in \cal P$ and $P_j \in \cal P$ are connected with an edge in ${\cal G}$ if there is an edge between sets
$V(P_i)$ and $V(P_j)$ in $G$. We assign each $P_i$ weight equal to the total weight of all 
vertices of $P_i$. Note that then the total weight of all vertices in $\cal G$ equals $w(V(G))$.

Observe that for every ${\cal A}\subset {\cal P}$ the induced graph ${\cal G}[{\cal A}]$ is connected if and only if 
the induced graph $G[A]$, where  $A=\bigcup_{P_i\in {\cal A}} V(P_i)$, is connected. Consequently,
if ${\cal C}_1, \dots,{\cal C}_t$ are connected components of ${\cal G} \setminus {\cal B}$
(for some ${\cal B}\subset {\cal P}$) then sets $C_j = \bigcup_{P_i\in {\cal C}_j} V(P_i)$ (for $j=1,\dots,t$) are connected 
components of $G \setminus B$ where $B=\bigcup_{P_i\in {\cal B}} V(P_i)$; moreover, the weight of each ${\cal C}_i$ equals
the weight of $C_i$.
Therefore, $\cal B$ is a balanced vertex separator in $\cal G$ if and only if 
$B=\bigcup_{P_i\in {\cal B}} V(P_i)$ is a balanced vertex separator in $G$.

We now prove that there is a balanced vertex separator in $\cal G$ of size
$O(\log n) \cdot |V(G)|$. Let ${\cal X} = \bigcup_{u\in X} \{P_i\in {\cal P}: P_i \text{ intersects } T(u)\}$.
First, we show that ${\cal X}$ is a balanced vertex separator in $\cal G$.
Denote $X' = \bigcup_{P_i\in {\cal X}} V(P_i)$. Observe that $X' \supset \bigcup_{u\in X} V(T(u))$. Indeed,
consider $v\in \bigcup_{u\in X} V(T(u))$. Then $v\in T(u)$ for some $u\in X$. Let $P_i$ be the path in $\cal P$ that contains $v$. 
Then $P_i$ intersects $T(u)$ at vertex $v$ and therefore $P_ i\in {\cal X}$. Hence $v \in V(P_i) \subset X'$.
We conclude that $X' \supset \bigcup_{u\in X} V(T(u))$. Since $\bigcup_{u\in X} V(T(u))$ is a balanced vertex separator in $G$,
set $X'$ is also a balanced vertex separator in $G$. Hence $\cal X$ is a balanced vertex separator in $\cal G$.
Now we upper bound the size of $\cal X$. Note that for every $u$, we have
$|\{P_i\in {\cal P}: P_i \text{ intersects } T(u)\}| = O(\log n)$ (by Lemma~\ref{lem:computing_caterpillar}). Thus we have,
$|{\cal X}| = O(\log n) \cdot |X|$. We proved that there is a balanced vertex separator in $\cal G$ of size
$O(\log n) \cdot |X|$. 

We use the algorithm of Feige, Hajiaghayi and Lee~\cite{FHL08} to find 
a $O(\sqrt{\log n})$ approximation for the optimal balanced vertex separator in $\cal G$. 
We get a $3/4$-balanced vertex separator ${\cal Y} \subset {\cal P}$ in $\cal G$ of size  at most
$O(\sqrt{\log n}) \cdot |{\cal X}| = O(\log^{3/2} n) \cdot |X|$.

Finally, we define the set $Y$. 
For every path $P_i \in {\cal P}$, let $p_i$ be a leaf of $T$ such that $P_i$ is a subset of $T(p_i)$. 
Let $Y= \{p_i: P_i \in {\cal P}\}$. 
Note that  $|Y| \leq |{\cal Y}| = O(\log^{3/2} n) \cdot |X|$.
Since $\cal Y$ is a $3/4$-balanced separator in $\cal G$, the set
$Y' = \bigcup_{P_i \in {\cal Y}} V(P_i)$ is a $3/4$-balanced 
separator in $G$, and therefore $\bigcup_{u\in Y} V(T(u)) \supset Y'$ 
is a $3/4$-balanced separator in $G$. 
\qed
\end{proof}

\section{Computing Planarizing Sets of Paths}

\begin{lemma}[Computing a planarizing set of paths]\label{lem:computing_planarizing_paths}
Let $G$ be an $n$-vertex graph of genus $g>0$.
Let $r\in V(G)$, and let $T$ be a spanning subtree of $G$ with root $r$.
Then, we can compute in polynomial time a set $X\subseteq V(G)$, with $|X| = O(g^{2}\cdot \log^{5/2} n)$, such that the graph $G \setminus \bigcup_{v\in X} V(T(v))$ is planar.
\end{lemma}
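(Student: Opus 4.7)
The plan is to iteratively find small sets of root-paths whose removal shrinks every non-planar connected component of the remaining graph by a constant factor, by invoking the weighted path-separator machinery of Lemmas~\ref{lem:genus_separator_weighted} and~\ref{lem:genus_separator_weighted_approximate}.

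I would initialize $X := \emptyset$. In round $i \geq 1$, set $G_i := G \setminus \bigcup_{v \in X} V(T(v))$; if $G_i$ is planar, stop and return $X$. Otherwise let $C_1, \dots, C_m$ be the non-planar connected components of $G_i$. Since $G_i$ is a subgraph of $G$ it has genus at most $g$, and by additivity of graph genus over connected components each non-planar $C_j$ contributes at least $1$ to that sum, so $m \leq g$. For each such $C_j$, define the indicator weight $w_j : V(G) \to \{0,1\}$ by $w_j(v) = 1$ iff $v \in V(C_j)$. Lemma~\ref{lem:genus_separator_weighted} applied to $(G, w_j)$ with the tree $T$ guarantees the existence of a set of size at most $4g+3$ whose root-paths form a balanced separator for $(G, w_j)$; invoking Lemma~\ref{lem:genus_separator_weighted_approximate} with this hypothesis, I would compute in polynomial time a set $Y_j \subseteq V(G)$ with $|Y_j| = O(g \log^{3/2} n)$ such that $\bigcup_{u \in Y_j} V(T(u))$ is a $3/4$-balanced separator for $(G, w_j)$. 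I then add every $Y_j$ to $X$ and proceed to round $i+1$.

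The key claim is that any non-planar component of $G_{i+1}$ has at most $3/4$ as many vertices as the component of $G_i$ containing it. Since $G_{i+1}$ is obtained from $G_i$ by further vertex deletion, each of its components lies inside a component of $G_i$, and vertex deletion preserves planarity, so new non-planar components only appear inside non-planar components of $G_i$. Fix a component $D$ of $G_{i+1}$ with $V(D) \subseteq V(C_j)$. From $G_{i+1} \subseteq G \setminus \bigcup_{u \in Y_j} V(T(u))$ we deduce that $D$ lies inside a single component of $G \setminus \bigcup_{u \in Y_j} V(T(u))$, whose $w_j$-weight is at most $(3/4) \cdot |V(C_j)|$; since $V(D) \subseteq V(C_j)$ this gives $|V(D)| \leq (3/4) \cdot |V(C_j)|$.

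Consequently the maximum size of a non-planar component is multiplied by at most $3/4$ each round, so after $O(\log n)$ rounds every non-planar component has fewer than $5$ vertices and $G_i$ is planar. Each round contributes at most $m \cdot O(g \log^{3/2} n) = O(g^2 \log^{3/2} n)$ vertices to $X$, yielding the claimed total $|X| = O(g^2 \log^{5/2} n)$. The step I expect to require the most care is the choice of weight function: the $\{0,1\}$-indicator of $V(C_j)$ is exactly what forces Lemma~\ref{lem:genus_separator_weighted_approximate} to cut inside the specific target component, rather than to produce a global separator of $G$ that merely carves up planar debris while leaving $C_j$ intact.
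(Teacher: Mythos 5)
Your proposal is correct and follows essentially the same route as the paper's own proof: iterate $O(\log n)$ rounds, in each round apply Lemma~\ref{lem:genus_separator_weighted} (existence of a $(4g+3)$-size path separator) and Lemma~\ref{lem:genus_separator_weighted_approximate} (its $O(\log^{3/2} n)$-approximate computation) to the indicator weight of each of the at most $g$ non-planar components, and argue that each non-planar component shrinks by a factor $3/4$ per round, giving $|X| = O(g^2\log^{5/2} n)$. The details, including the use of indicator weights on $G$ itself and the genus bound on the number of non-planar components, match the paper's argument.
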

\begin{proof}
We inductively construct a sequence $\{X_i\}_{i=0}^k$, for some $k=O(\log n)$, where for every $i\in \{0,\ldots,k\}$, we have $X_i\subseteq V(G)$.
The resulting desired set will be $X=\bigcup_{i=0}^k X_i$.

For the basis of the induction, we set $X_0=\emptyset$.

Let $i>0$, and suppose that $X_{i-1}$ has already been constructed.
We show how to construct $X_i$.
Let ${\cal C}_i$ be the set of connected components of $G\setminus \bigcup_{j=0}^{i-1} \bigcup_{u\in X_j} V(T(u))$.
Let also ${\cal C}_i'$ be the set of non-planar components in ${\cal C}_i$.
Note that $G$ is the only component in ${\cal C}_1$.
For every component $C\in {\cal C}_i$ we define a function $w_C:V(G)\to \mathbb{R}_{\geq 0}$ such that for every $v\in V(G)$,
\[
w_C(v) = \left\{
\begin{array}{ll}
  1, & \mbox{ if } v\in V(C)\\
  0, & \mbox{ if } v\notin V(C)
\end{array}
\right.
\]
By Lemma \ref{lem:genus_separator_weighted} it follows that there exists $Y_C\subseteq V(G)$, with $|Y_C| \leq 4g+3$, such that $\bigcup_{v\in Y_C}V(T(v))$ is a balanced vertex separator for $(G, w_C)$.
Therefore, by Lemma \ref{lem:genus_separator_weighted_approximate} we can compute in polynomial time a set $Z_C\subseteq V(G)$, with 
\[
|Z_C| \leq O(\log^{3/2} n)\cdot |Y_C| \leq O(\log^{3/2} n) \cdot (4g+3),
\]
and such that $\bigcup_{v\in Z_C}V(T(v))$ is an $3/4$-balanced vertex separator for $(G, w_C)$.
We set
\[
X_i = \bigcup_{C\in {\cal C}_i'} Z_C.
\]
This concludes the inductive construction of the sequence $\{X_i\}_{i=0}^k$.

We next show that for some $k=O(\log n)$, the set $X=\bigcup_{i=0}^k X_i$ is as required.
Consider some $i\geq 1$, and let $C\in {\cal C}_i'$ be a non-planar connected component of $G\setminus \bigcup_{j=0}^{i-1} \bigcup_{u\in X_j} V(T(u))$.
Observe that there exists a connected component $C'\in {\cal C}_{i-1}$ such that $C\subseteq C'$.
Since $C$ is non-planar, it follows that $C'$ is also non-planar, and thus $C'\in {\cal C}_{i-1}'$.
By the construction, the set $X_i$ contains the set $Z_{C'}$, where $\bigcup_{v\in Z_{C'}}V(T(v))$ is 
a $3/4$-balanced vertex separator for $(G, w_{C'})$.
It follows that $|V(C)| \leq 3|V(C')|/4$.
Thus, the size of every non-planar connected component in ${\cal C}_i$ is at most $(3/4)^{i-1}|V(G)|$.
This implies in particular that for $k=\lceil \log_{4/3} n\rceil-1$, the set ${\cal C}_k$ does not contain any non-planar connected components, and therefore the graph $G \setminus \bigcup_{v\in X} V(T(v))$ is planar.

It remains to upper bound $|X|$.
Since $G$ has genus $g$, we have that for every $i\in \{0,\ldots,k\}$, the set ${\cal C}_i$ contains at most $g$ non-planar connected components, i.e.~$|{\cal C}_i'|\leq g$.
Therefore,
\[
|X| \leq \sum_{i=0}^k |X_i| \leq \sum_{i=0}^k \sum_{C\in {\cal C}_i'} |Z_C| \leq \sum_{i=0}^k \sum_{C\in {\cal C}_i'} O(\log^{3/2} n)\cdot (4g+3) \leq O(g^2 \cdot \log^{5/2} n),
\]
as required.
\qed
\end{proof}

\section{Putting Everything Together}

The next lemma follows by the work of Sidiropoulos \cite{sidiropoulos2010optimal}.

\begin{lemma}[Sidiropoulos \cite{sidiropoulos2010optimal}]\label{lem:optimal}
Let $G$ be a graph, and $r\in V(G)$.
Let $P_1,\ldots,P_k$ be a collection of shortest paths in $G$, having $r$ as a common end-point.
Suppose that $G \setminus \bigcup_{i=1}^k V(P_i)$ is planar.
Then, $G$ admits a stochastic embedding into planar graphs, with distortion $O(\log k)$.
Moreover, if the paths $P_1,\ldots, P_k$ are given, then we can sample from the stochastic embedding in polynomial time.
\end{lemma}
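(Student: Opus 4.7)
The plan is to adapt a Bartal / Fakcharoenphol--Rao--Talwar style random partitioning, applied to the union of the planarizing paths, combined with a cut-and-paste topological surgery to produce a random planar graph. Since all $P_i$ are shortest paths sharing the common endpoint $r$, the union $H = \bigcup_i P_i$ behaves like a one-dimensional tree-like bouquet of $k$ arcs rooted at $r$, and $G \setminus V(H)$ is planar by hypothesis. The construction I would use: generate a random hierarchical partition of $H$ using standard random shifts, placing cuts at the boundaries of intervals at each dyadic scale; then form a random graph $G'$ by splitting each cut vertex into two copies, so that $H$ decomposes into a forest of subpaths attached to the planar complement $G \setminus V(H)$. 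The stochastic embedding is then given by the random graph $G'$, with the natural identification mapping each original vertex of $G$ to one of its copies in $G'$.

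For the distortion analysis, non-contraction is immediate because $G'$ is obtained from $G$ only by vertex-splittings, which can only lengthen paths. For the expansion bound, fix a pair $u, v \in V(G)$ and consider a shortest $u \to v$ path $\pi$ in $G$. The path $\pi$ survives in $G'$ unless it crosses a cut on some $P_i$; by the standard padding-lemma calculation, the probability that a length-$\ell$ segment is severed at scale $2^j$ is $O(\ell / 2^j)$, and the induced detour, which must reroute through the remaining copies of $P_i$ or through the planar complement using the path to $r$, has length $O(2^j)$. The point that would deliver the bound $O(\log k)$ rather than $O(\log n)$ is that the random partition is applied only to $H$, a tree-like structure with at most $k$ leaves; the effective number of ``independent'' cut decisions seen by any pair is $O(\log k)$ after the telescoping over scales.

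The main obstacle I anticipate is establishing planarity of $G'$ without access to any surface embedding of $G$. Topologically, one would like to say that cutting along a planarizing system $H$ in a genus-$g$ surface yields a disk, into which the planar complement can be drawn; but here we have only the purely combinatorial hypothesis that $G \setminus V(H)$ is planar. My plan is a combinatorial argument: fix a planar embedding of $G \setminus V(H)$, which induces cyclic orderings at each vertex where a path $P_i$ was attached; splitting along cuts produces a forest of path-segments, each of which can be re-inserted into an appropriate face of the planar embedding without introducing crossings, provided the cyclic orderings at every vertex of $H$ are respected and the segments are routed to new copies of $r$ placed in the corresponding faces. Formalizing this ``face routing'' argument, verifying compatibility of the local orderings across the cuts, and handling paths that share long prefixes near $r$ is where I expect the bulk of the technical work to lie.
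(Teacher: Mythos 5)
The paper does not prove this lemma at all: it is imported verbatim from \cite{sidiropoulos2010optimal}, and the whole point of the present paper is that the rest of the argument only needs the paths, not a drawing. So what you are really attempting is a fresh proof of the main theorem of that cited paper, and your sketch has a concrete gap exactly at the step you flag, plus one you do not flag. First, planarity of $G'$: the hypothesis only says that $G$ becomes planar when \emph{all} of $\bigcup_i V(P_i)$ is removed. In your construction you split vertices only at the randomly chosen cut points and keep every intermediate segment of each $P_i$ together with \emph{all} of its original edges into $G\setminus V(H)$; nothing guarantees that ``planar complement $+$ intact segments with their attachments'' is planar --- in general it is not, since the non-planarity of $G$ lives precisely in how the path vertices attach to the rest. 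Splitting a vertex into ``two copies, one per side'' presupposes knowing which incident edges lie on which side, i.e.\ the rotation system of a surface embedding, which is exactly the data that is unavailable here. Your proposed repair (fix a planar embedding of $G\setminus V(H)$ and re-insert the segments by ``face routing'') cannot work as stated either: whether the segments with their many attachments can be simultaneously inserted without crossings is governed by the genus of $G$, not by local compatibility of cyclic orders, so this step is not a technicality but the place where the construction must change (the actual construction has to sever or reroute the segment-to-complement attachments in a controlled way and then pay for those severed edges in the distortion analysis).

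Second, the distortion bound. A hierarchical random-shift scheme with cuts at every dyadic scale, where a cut at scale $2^j$ occurs with probability $O(\ell/2^j)$ and costs a detour $O(2^j)$, contributes $O(d(u,v))$ expected stretch \emph{per scale}, and summing over scales gives $O(\log(\mathrm{diam}))$ or $O(\log n)$, not $O(\log k)$. Your claim that the ``effective number of independent cut decisions is $O(\log k)$'' is not substantiated: a pair lying close to a single path $P_i$ is threatened by cuts of that one path at every scale, so the number of paths $k$ does not by itself cap the telescoping. Obtaining distortion depending only on $k$ is the heart of \cite{sidiropoulos2010optimal} and exploits the specific structure that all $P_i$ are shortest paths through the common root $r$ (so that detours can be routed via $r$ at cost comparable to distances already present); your sketch does not yet contain the mechanism that converts this structure into a $\log k$ rather than $\log n$ bound. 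The non-contraction observation is fine, but both the planarity and the $O(\log k)$ expansion bound remain unproved as written.
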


\begin{theorem}[Kawarabayashi, Mohar \& Reed \cite{DBLP:conf/focs/KawarabayashiMR08}]\label{thm:genus_linear}
There exists an algorithm which given a graph $G$ of genus $g$, computes a drawing of $G$ into a surface of genus $g$, in time $O\left(2^{O(g)}\cdot n\right)$.
\end{theorem}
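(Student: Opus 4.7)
The statement is cited verbatim from Kawarabayashi, Mohar, and Reed~\cite{DBLP:conf/focs/KawarabayashiMR08}, so in this paper I would not attempt to reprove it; the ``proof'' in the main text will consist of a pointer to that work, and the theorem will be used as a black box in the remainder of the ``Putting Everything Together'' section. Still, it is worth sketching the shape of the underlying argument, since that clarifies why we cannot simply re-run it in the regime of interest.

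The approach of \cite{DBLP:conf/focs/KawarabayashiMR08} refines a long line of work going back to Mohar's linear-time genus-testing algorithm. The skeleton is a dichotomy on $G$: either its tree-width is bounded by a function of $g$, in which case a standard dynamic program over a tree decomposition suffices, or else $G$ contains a large, highly connected piece that is forced to embed on a sub-surface in an essentially unique way, which one can identify and contract so that the remaining instance has strictly smaller complexity. The dynamic program maintains, for each bag of the decomposition, all combinatorial rotation systems compatible with the partial embedding constructed so far; with a careful encoding these partial states number at most $2^{O(g)}$, which is the source of the $2^{O(g)}$ factor in the running time. The recursion in the other branch is engineered to be shallow, so that the total work across all recursive calls sums to $O(n)$.

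The main obstacle in carrying this out from scratch is the dichotomy itself: establishing that a genus-$g$ graph is either of bounded tree-width or contains a uniquely embeddable chunk requires delicate topological arguments --- relating face-width to tree-width, controlling $2$-cell embeddings, exploiting $3$-connectivity --- and is the bulk of the KMR paper. A secondary obstacle is the bookkeeping needed to keep the DP and the recursion \emph{linear} in $n$, as opposed to merely polynomial, which forces an intricate data structure for rotation systems and contracted pieces. None of this has to be redone for the present paper; the theorem is applied as-is, presumably to observe that for $g = O(\log n)$ a drawing is already computable in polynomial time, so that the regime of genuine interest for Theorem~\ref{tim:main} is $g = \omega(\log n)$, where $2^{O(g)}$ is super-polynomial and invoking KMR alone no longer suffices.
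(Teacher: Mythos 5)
Your proposal matches the paper exactly: Theorem~\ref{thm:genus_linear} is imported from Kawarabayashi, Mohar and Reed with no proof given, and it is used as a black box in the final section precisely as you describe, namely to handle the case $g \leq \log n$ where $2^{O(g)} \cdot n$ is polynomial, while the paper's new machinery takes over when $g > \log n$. Your sketch of the underlying KMR argument is a reasonable (uncheckable here, but plausible) summary and does not conflict with anything in the paper.
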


\begin{theorem}[Main result]
There exists a polynomial time algorithm which given a graph $G$ of genus $g>0$, computes a stochastic embedding of $G$ into planar graphs, with distortion $O(\log (g+1))$.
In particular, the algorithm does not require a drawing of $G$ as part of the input.
\end{theorem}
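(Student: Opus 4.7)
The plan is to reduce the problem to Lemma~\ref{lem:optimal}, which takes a collection of shortest paths sharing a common endpoint whose removal planarizes $G$ and outputs the desired stochastic embedding in polynomial time. I would fix an arbitrary vertex $r\in V(G)$, compute a shortest-path tree $T$ rooted at $r$ (so that each root-to-vertex path $T(v)$ is itself a shortest path in $G$), and then invoke Lemma~\ref{lem:computing_planarizing_paths} to obtain in polynomial time a set $X\subseteq V(G)$ with $|X|=O(g^{2}\log^{5/2} n)$ such that $G\setminus \bigcup_{v\in X}V(T(v))$ is planar. Feeding the shortest paths $\{T(v):v\in X\}$ into Lemma~\ref{lem:optimal} then yields a stochastic embedding of $G$ into planar graphs with distortion $O(\log|X|)=O(\log g+\log\log n)$.

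The hard part is the extra $\log\log n$ term, which spoils the target $O(\log(g+1))$ bound whenever $g$ is significantly smaller than $\log n$. To eliminate it, I would split on the magnitude of $g$. If $g = O(\log n)$, then Theorem~\ref{thm:genus_linear} of Kawarabayashi, Mohar, and Reed computes an explicit drawing of $G$ into a genus-$g$ surface in time $2^{O(g)}\cdot n = n^{O(1)}$; with such a drawing in hand, Theorem~\ref{thm:planarization_spanning} furnishes a planarizing family of only $O(g)$ shortest $T$-paths sharing the endpoint $r$, and Lemma~\ref{lem:optimal} turns this family into a stochastic embedding with distortion $O(\log(g+1))$. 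Otherwise $g$ exceeds $c\log n$ for a suitable constant $c$, so $\log\log n = O(\log g)$ and the bound from the previous paragraph already simplifies to $O(\log(g+1))$.

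Since $g$ is not given as part of the input, I would run both procedures in parallel: execute the Kawarabayashi--Mohar--Reed algorithm under a fixed polynomial-time cutoff, while simultaneously building the path set $X$ via Lemma~\ref{lem:computing_planarizing_paths}. If the drawing-based computation terminates within the budget, we use its $O(g)$-path planarizing family; otherwise the failure of the cutoff already certifies that $g$ is large enough for the cruder set $X$ to give distortion $O(\log(g+1))$. In either case we end up applying Lemma~\ref{lem:optimal} to a collection of shortest paths emanating from $r$, yielding a polynomial-time stochastic embedding of $G$ into planar graphs with distortion $O(\log(g+1))$ and never using a drawing of $G$ in the input.
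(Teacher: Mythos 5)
Your proposal is correct and takes essentially the same route as the paper: both case-split on whether $g$ is at most roughly $\log n$ (decided by running the Kawarabayashi--Mohar--Reed algorithm within a polynomial time budget), use the resulting drawing when the genus is small, and otherwise combine Lemma~\ref{lem:computing_planarizing_paths} with Lemma~\ref{lem:optimal}, observing that $\log|X| = O(\log(g+1))$ once $g \gtrsim \log n$. The only cosmetic difference is that in the small-genus case the paper invokes Sidiropoulos's algorithm directly on the drawing, whereas you route through Theorem~\ref{thm:planarization_spanning} (with a shortest-path tree) plus Lemma~\ref{lem:optimal}, which is what that algorithm does internally anyway.
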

\begin{proof}
We can use the algorithm of Kawarabayashi, Mohar \& Reed from Theorem \ref{thm:genus_linear} to test whether $g\leq \log n$ in polynomial time.
If $g\leq \log n$, then the algorithm from Theorem \ref{thm:genus_linear} returns a drawing of $G$ into a surface of genus $g$.
Since we have a drawing of $G$ into a surface of genus $g$, we can use the algorithm of Sidiropoulos \cite{sidiropoulos2010optimal}, to compute the required embedding.

Otherwise, if $g>\log n$, we proceed as follows.
Let $r$ be an arbitrary vertex in $G$, and let $T$ be a shortest-path tree in $G$, with root $r$.
By Lemma \ref{lem:computing_planarizing_paths} we can compute a set $X\subseteq V(G)$, with $|X|=O(g^2\cdot \log^{5/2} n) = O(g^{9/2})$, such that the graph $G\setminus \bigcup_{v\in X} V(T(v))$ is planar.
Since for every $v\in X$, the path $T(v)$ has $r$ as an endpoint, it follows that we can use Lemma \ref{lem:optimal} with the collection of paths $\{T(v)\}_{v\in X}$, to compute in polynomial time a stochastic embedding into planar graphs, with distortion $O(\log |X|) = O(\log (g+1))$, as required.
\qed
\end{proof}

\bibliography{bibfile}
\bibliographystyle{alpha}

\end{document}